\newtheorem{theorem}{Theorem}
\newtheorem{proposition}{Proposition}
\newtheorem{example}{Example}
\newtheorem{definition}{Definition}
\newcommand{\C}{\mathbb C} 
\newcommand{\half}{\tfrac{1}{2}} 
\newcommand{\hi}{\mathcal{H}} 
\newcommand{\eh}{\mathcal{E(H)}} 
\newcommand{\ip}[2]{\left\langle\,#1\,|\,#2\,\right\rangle} 
\newcommand{\kb}[2]{|#1\rangle\langle#2|} 
\newcommand{\tr}[1]{\textrm{tr}\left[#1\right]} 
\newcommand{\id}{\mathbbm{1}} 
\newcommand{\nul}{0} 
\newcommand{\rank}[1]{\textrm{rank}\left(#1\right)} 
\newcommand{\Ao}{\mathsf{A}}
\newcommand{\Bo}{\mathsf{B}}
\newcommand{\Co}{\mathsf{C}}
\newcommand{\Aspan}{\mathcal{A}}
\newcommand{\R}{\mathbb{R}}
\begin{document}

\title[Relabeling and mixing]{Quantum measurements on finite dimensional systems: relabeling and mixing}

\author[Haapasalo]{Erkka Haapasalo}
\email{ethaap@utu.fi}

\author[Heinosaari]{Teiko Heinosaari}
\email{teiko.heinosaari@utu.fi}

\author[Pellonp\"a\"a]{Juha-Pekka Pellonp\"a\"a}
\email{juhpello@utu.fi}


\begin{abstract}
Concentrating on finite dimensional systems, we show that one can limit to extremal rank-1 POVMs if two simple procedures of mixing and relabeling are permitted.
We demonstrate that any finite outcome POVM can be obtained from extremal rank-1 POVMs with these two procedures.
In particular, extremal POVMs with higher rank are just relabelings of extremal rank-1 POVMs and their structure is therefore clarified.
\end{abstract}


\maketitle

\section{Introduction}\label{sec:intro}

Quantum information theory has raised new questions related to the conventional quantum formalism. 
One particular idea is the minimal resource perspective.
Generally, the underlying question is the following.
If we want to realize all devices of the given type and our methods are specified, what are those devices that allow one to construct all desired devices? 

For instance, suppose we want to be able to prepare all pure states of a bipartite system. 
If we are able to prepare all maximally entangled pure states, then by means of local operations and classical communication (LOCC) we can prepare all pure states. 

These type of questions and their variants have been investigated extensively in the case of quantum states. 
Similar minimal resource investigations of quantum measurements are less common. 
From the mathematical point of view, quantum measurements are described by positive operator valued measures (POVMs) and questions therefore reduce to the problems on the mathematical structure of POVMs.

In this work we concentrate on one type of minimal resource problem in the case of finite outcome POVMs on a finite dimensional system.
If we can implement two POVMs, then we can implement their mixture by randomly alternating the measurements. 
An extremal POVM corresponds to a measurement that cannot be obtained as a mixture.
For a qubit system all extremal POVMs are of rank-1 \cite{DaLoPe05} and one can thus restrict to rank-1 POVMs when searching for an optimal measurement for some task (see e.g. \cite{VeBe10}).
Generally, however, there are extremal POVMs which are not rank-1 and their relevance is somewhat unclear, even if they have been mathematically characterized \cite{DaLoPe05},  \cite{Parthasarathy99}, \cite{Pellonpaa11}.

In this work we show that one can limit to extremal rank-1 POVMs, assuming that two simple procedures are permitted.
It turns out that it is enough to allow another simple method than mixing; this is called \emph{relabeling}. 
We show that any POVM can be obtained from extremal rank-1 POVMs with these two procedures of mixing and relabeling.
This result summarizes the structure of POVMs in a very convenient and accesible way.

We remark that some related results have been derived in \cite{MaMu90a}.
In particular, it was shown that any finite outcome POVM on a finite dimensional system can be obtained from a rank-1 POVM by a stochastic matrix.
This procedure can be interpreted as a postprocessing of obtained measurement outcomes \cite{BuDaKePeWe05}.
In this context, relabeling is just a deterministic processing method, i.e., the related stochastic matrix contains only entries $0$ and $1$.
Since a double stochastic matrix is a convex mixture of permutation matrices \cite{bir}, one can easily see a connection between these two approaches.

It was argued in \cite{MaMu90a} that maximality in the postprocessing relation is physically more relevant than extremality.
This may be the case, but we believe that a solid understanding of the extremal POVMs is important since in many problems one has to maximize a convex figure of merit.
Our aim in this work is to clarify the relations between all POVMs, extremal POVMs and extremal rank-1 POVMs.

\section{Preliminaries}

Quantum measurements are generally described by positive operator valued measures (POVMs).
In this work we concentrate on measurements with finite number of outcomes. 
In Sec. \ref{sec:general} we make some remarks on POVMs with infinite number of outcomes.

Let $\hi$ be a finite $d$-dimensional Hilbert space.
We denote by $\eh$ the set of all selfadjoint operators satisfying the operator inequalities $\nul\leq E\leq\id$; these are called \emph{effects}.
A POVM is an assignment of an effect for each measurement outcome \cite{PSAQT82}.
The particular labeling of measurement outcomes is irrelevant for our investigation. 
We will therefore assume that the measurement outcomes are labeled by the integers $\{1,\ldots, N\}$.
Thus, a POVM $\Ao$ with $N$ outcomes is a function $j\mapsto \Ao(j)$ from the outcome set $\Omega_N\equiv \{1,\ldots, N\}$ to the set of effects  $\eh$ and it is required to satisfy the normalization condition $\sum_{j=1}^N \Ao(j) = \id$.

We remark that it is possible that $\Ao(j)=\nul$ for some $j$. 
This simply means that the outcome $j$ is never registered.
Two POVMs that differ only by their number of zero operators are considered equivalent. 

A special class of POVMs are those that consists of projections. 
We say that a POVM $\Ao$ is a \emph{projection valued measure} (PVM) if each $\Ao(j)$ is a projection, that is, $\Ao(j)^2 = \Ao(j)$ for every $j$.

Another special class of POVMs are rank-1 POVMs.
A POVM $\Ao$ is called \emph{rank-1} if all its (nonzero) effects $\Ao(j)$ are rank-1 operators \cite{rank1}.
A PVM consisting of one-dimensional projections is rank-1, but there are also other rank-1 POVMs (we give examples in Subsec. \ref{sec:mix}).
We make a simple observation on the number of outcomes in a rank-1 POVM.

\begin{proposition}\label{prop:Nd}
Suppose $\Ao$ is a rank-1 POVM and has $N$ nonzero outcomes. Then $N\geq d$.
If $N=d$, then $\Ao$ is a PVM.
\end{proposition}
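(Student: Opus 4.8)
The plan is to treat the two assertions separately: the bound $N\ge d$ follows from a rank count, and the equality case from a short linear-algebra argument.

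For the inequality, I would exploit subadditivity of rank under operator addition. Since each nonzero effect $\Ao(j)$ has rank $1$ and the normalization gives $\sum_{j=1}^N \Ao(j)=\id$, we obtain
\[
d=\rank{\id}=\rank{\textstyle\sum_{j=1}^N \Ao(j)}\le\sum_{j=1}^N\rank{\Ao(j)}=N.
\]
(Equivalently, if $N<d$ one could choose a unit vector $v$ orthogonal to all the range vectors and evaluate $\ipp{v}{\id}{v}$ to reach a contradiction, but the rank count is cleaner.)

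For the equality case, write each rank-1 positive effect as $\Ao(j)=\kb{\phi_j}{\phi_j}$ with $\phi_j\neq\nul$. The range of $\sum_j\Ao(j)=\id$ is all of $\hi$, and it is contained in $\span\{\phi_1,\dots,\phi_d\}$; since this is a set of $d=\dim\hi$ vectors spanning the whole space, they must be linearly independent and hence form a basis of $\hi$. The key step is then to upgrade \emph{basis} to \emph{orthonormal basis}. I would apply the resolution of identity to a single basis vector $\phi_k$:
\[
\phi_k=\id\,\phi_k=\sum_{j=1}^d\phi_j\,\ip{\phi_j}{\phi_k}.
\]
Because $\{\phi_j\}$ is a basis, the coefficients in this expansion are unique, so comparing with $\phi_k=\sum_j\delta_{jk}\,\phi_j$ gives $\ip{\phi_j}{\phi_k}=\delta_{jk}$ for all $j,k$. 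Hence the $\phi_j$ are orthonormal, each $\Ao(j)=\kb{\phi_j}{\phi_j}$ is a rank-1 orthogonal projection, and $\Ao$ is a PVM.

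I expect the only (modest) obstacle to be in the equality case: one must first extract linear independence of the $\phi_j$ from $N=d$ before the Gram-matrix identity can be read off. Once the vectors are known to form a basis, uniqueness of coordinates does all the work, and no positivity beyond the rank-1 assumption is needed.
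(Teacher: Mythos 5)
Your proof is correct. The inequality half coincides exactly with the paper's rank-subadditivity argument, but for the equality case you take a genuinely different route. The paper counts traces: from $d=\tr{\id}=\sum_{j=1}^d\tr{\Ao(j)}$ and the bound $0<\tr{\Ao(j)}\leq 1$ (valid because each $\Ao(j)$ is a rank-1 effect, $E=eP$ with $0<e\leq1$ and $\tr{E}=e$), saturation forces $\tr{\Ao(j)}=1$, and a rank-1 effect with unit trace is a one-dimensional projection. You instead write $\Ao(j)=\kb{\phi_j}{\phi_j}$, observe that the resolution of the identity forces the $d$ vectors $\phi_j$ to span $\hi$ and hence form a basis, and then read off $\ip{\phi_j}{\phi_k}=\delta_{jk}$ from uniqueness of coordinates in a basis. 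Both are sound. The paper's trace count is shorter and scalar; yours is purely linear-algebraic, avoids any eigenvalue bound, in effect proves the stronger fact that a Parseval frame of exactly $d$ vectors in a $d$-dimensional space is an orthonormal basis, and delivers mutual orthogonality of the effects directly rather than only that each one is a projection. One small caveat: your closing remark that ``no positivity beyond the rank-1 assumption is needed'' should be read carefully --- positivity is precisely what licenses the form $\Ao(j)=\kb{\phi_j}{\phi_j}$, since a rank-1 selfadjoint operator could equally be $-\kb{\phi}{\phi}$; what your argument genuinely dispenses with is the effect bound $\Ao(j)\leq\id$ on which the paper's trace estimate rests (though that bound is in any case automatic from positivity and $\sum_j\Ao(j)=\id$).
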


\begin{proof}
Since 
\begin{eqnarray*}
N &=&\rank{\Ao(1)} + \cdots +\rank{\Ao(N)} \\
&\geq & \rank{\Ao(1) +\cdots +\Ao(N)} \\
&=& \rank{\id}=d \, , 
\end{eqnarray*}
we conclude that $N\geq d$.

Suppose $N=d$. Then
\begin{equation}\label{eq:N=d}
d = \tr{\id}= \sum_{j=1}^d \tr{\Ao(j)} \, .
\end{equation}
Since $\Ao(j)$ is rank-1, we have $0< \tr{\Ao(j)} \leq 1$.
Thus, \eqref{eq:N=d} implies that $\tr{\Ao(j)}=1$ for every $j$.
It follows that each $\Ao(j)$ is a one-dimensional projection \cite{rank1}.
\end{proof}

\section{Relabeling and mixing of POVMs}\label{sec:mix}

There are two basic procedures to obtain a new POVM out of the given ones, relabeling and mixing.
We will first consider some of their properties and then show that, if taken together, we can construct any POVM from extremal rank-1 POVMs by using these two methods.

\subsection{Relabeling}

By \emph{relabeling} we mean a procedure where the labels of the measurement outcomes are shuffled, possibly giving same label to several different outcomes.
If the resulting effects are kept fixed or only the number of zero effects is modified, we consider the new POVM to be equivalent to the initial POVM.
In particular, this kind of transformation is reversible since we can relabel the new POVM again to get the initial POVM back.

In irreversible transformations same label is given to several different nonzero outcomes; this means that the corresponding effects are added together.
Namely, suppose a POVM $\Ao$ has $N$ outcomes.
We define a new POVM $\Ao'$ with $M<N$ outcomes by identifying two or several outcomes.
For instance, if the outcomes $1$ and $2$ are identified and the rest are just renamed, then the resulting POVM $\Ao'$ has $N-1$ outcomes and it is given by
\begin{eqnarray*}
\Ao'(1) &=& \Ao(1) + \Ao(2) \, , \\
\Ao'(j) &=& \Ao(j+1) \, , \quad j=2,\ldots N-1 \, .
\end{eqnarray*}
This is depicted in Fig. \ref{fig:relabeling}.

\begin{figure}
\begin{center}
\includegraphics[width=8.0cm]{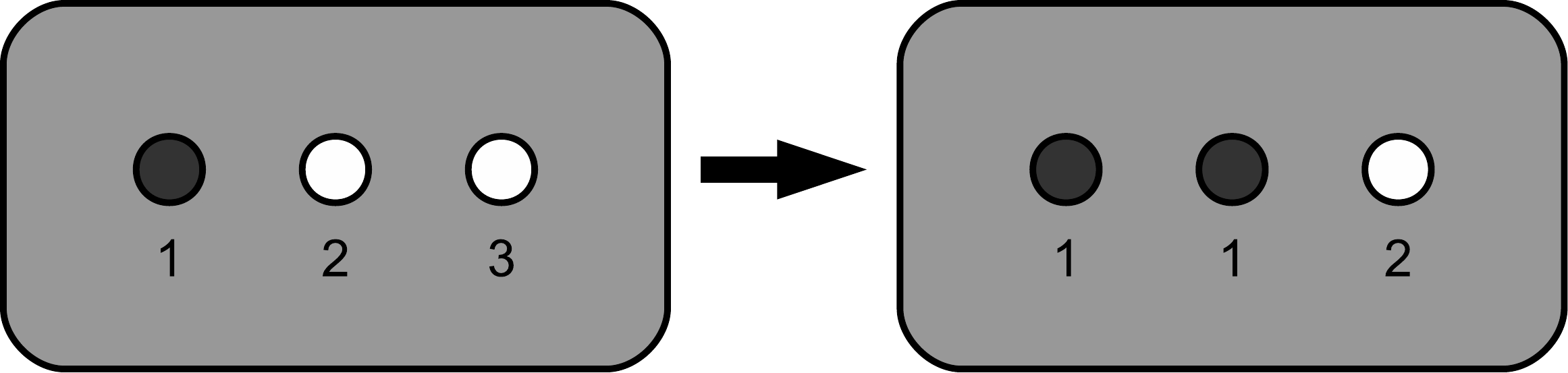}
\end{center}
\caption{\label{fig:relabeling} 
A POVM can be thought of as a box with LEDs indicating the obtained measurement outcome.
The measurement outcome '1' is recorded if the first LED flashes.
After the relabeling, the measurement outcome '1' is recorded if either the first or the second LED flashes.}
\end{figure}

A general definition can be formulated as follows.

\begin{definition}(\emph{Relabeling})
A POVM $\Ao$ can be obtained from another POVM $\Bo$ by relabeling of outcomes if there is a function $f:\Omega_N\to\Omega_M$ such that
\begin{equation*}
\Ao(j) = \sum_{k \in f^{-1}(j)} \Bo(k) \, .
\end{equation*}
(Here $f^{-1}(j)$ is the preimage of $j$, i.e., $f^{-1}(j)=\{ k\in\Omega_N : f(k)=j \}$.)
\end{definition}

It is now an easy consequence of the spectral theorem that all POVMs are relabelings of rank-1 POVMs.

\begin{proposition}\label{prop:relabel}
A POVM $\Ao$ with $N$ outcomes is a relabeling of a rank-1 POVM $\Bo$ with $M\leq N\cdot d$ outcomes.
\end{proposition}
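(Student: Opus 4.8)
The plan is to construct $\Bo$ by spectrally decomposing each effect of $\Ao$ into its rank-1 eigenprojections and then collecting all of these rank-1 pieces into a single POVM. The relabeling function $f$ will simply record, for each rank-1 piece, the original outcome $j$ from which it was split off.

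First I would apply the spectral theorem to each effect. Since $\Ao(j)\in\eh$ is selfadjoint and acts on the $d$-dimensional space $\hi$, I can write
\begin{equation*}
\Ao(j) = \sum_{i=1}^{r_j} \lambda_{j,i}\, \kb{e_{j,i}}{e_{j,i}} \, ,
\end{equation*}
where $r_j = \rank{\Ao(j)}\leq d$, the numbers $\lambda_{j,i}>0$ are the nonzero eigenvalues, and the $\ket{e_{j,i}}$ form an orthonormal family of eigenvectors. Each summand $\lambda_{j,i}\,\kb{e_{j,i}}{e_{j,i}}$ is a rank-1 effect, since $0<\lambda_{j,i}\leq 1$, where the upper bound follows from $\Ao(j)\leq\id$.

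Next I would enumerate all of these rank-1 pieces. Setting $M=\sum_{j=1}^N r_j$, I obtain $M\leq N\cdot d$ because each $r_j\leq d$. I relabel the pairs $(j,i)$ by a single index $k\in\Omega_M$, defining $\Bo(k)$ to be the corresponding rank-1 effect and $f(k)=j$ whenever the $k$-th piece arose from $\Ao(j)$. Then $\Bo$ is rank-1 by construction, and the computation $\sum_{k=1}^M \Bo(k) = \sum_{j=1}^N \Ao(j)=\id$ shows that $\Bo$ is a genuine POVM. Finally, $\sum_{k\in f^{-1}(j)}\Bo(k)=\Ao(j)$ holds by the spectral decomposition, so $\Ao$ is the relabeling of $\Bo$ via $f$, as required.

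I do not expect any serious obstacle here; the statement is essentially the spectral theorem repackaged in the language of relabeling. The only points requiring a little care are the bookkeeping that flattens the double index $(j,i)$ into a single outcome label $k$, and the verification that the eigenvalues lie in $(0,1]$ so that the pieces really are effects; neither is difficult. If a uniform count is preferred, one could instead retain the zero eigenvalues and take exactly $M=N\cdot d$ outcomes, absorbing the resulting zero effects into the equivalence of POVMs discussed earlier.
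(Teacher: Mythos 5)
Your proposal is correct and follows essentially the same route as the paper's own proof: spectrally decompose each effect $\Ao(j)$ into rank-1 pieces, collect them into a POVM indexed by pairs $(j,i)$, and relabel via the projection onto the first coordinate. The only (cosmetic) difference is that you discard the zero eigenvalues from the outset, whereas the paper keeps all $d$ spectral terms per effect and drops the zero ones at the end; both yield $M\leq N\cdot d$.
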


\begin{proof}
For each $j$, we write the effect $\Ao(j)$ in its spectral decomposition form
\begin{equation*}
\Ao(j) = \alpha_{j1} P_{j1} + \cdots \alpha_{jd} P_{jd} \, .
\end{equation*}
Here $\alpha_{j1},\ldots,\alpha_{jd}$ are the eigenvalues (counting multiplicities) of $\Ao(j)$ and $P_{j1},\ldots,P_{jd}$ are orthogonal one-dimensional projections.
We define a POVM $\Bo$ on the product set $\Omega_N\times\Omega_d$ by
\begin{equation*}
\Bo(j,k) = \alpha_{jk} P_{jk} \, .
\end{equation*}
Since 
\begin{equation*}
\Ao(j)=\sum_{k=1}^d \Bo(j,k) = \sum_{(j,k) \in f^{-1}(j)} \Bo(j,k)
\end{equation*}
for the projection function $f:\Omega_N\times\Omega_d\to\Omega_N$, $f(j,k)=j$,
we conclude that $\Ao$ can be obtained from $\Bo$ by relabeling.
Every nonzero effect of $\Bo$ is of rank-1.
Therefore, by dropping zero effects we obtain a rank-1 POVM with $M\leq Nd$ outcomes.
\end{proof}

Prop. \ref{prop:relabel} alone is not useful from the minimal resource aspect since the number of outcomes of the required rank-1 POVMs depends on the implemented POVM.

\subsection{Mixing}\label{sec:mix}

Another basic procedure is to alternate two measurements in a random fashion.
Suppose $\Bo$ and $\Co$ are two POVMs.
We can start from the situation that $\Bo$ has $N$ outcomes and $\Co$ has $M\leq N$ outcomes.
We then add enough zero operators to $\Co$ so that it has also $N$ outcomes. 
A mixture of $\Bo$ and $\Co$ is defined as
\begin{equation}\label{eq:mix}
\Ao(j) = t \Bo(j) + (1-t) \Co(j) \, , \quad j=1,\ldots,N \, ,
\end{equation}
where $0<t<1$ is the mixing weight.

\begin{definition}(\emph{Mixture})
A POVM $\Ao$ is a \emph{mixture} of two POVMs $\Bo$ and $\Co$ if \eqref{eq:mix} holds for some $0<t<1$.
An \emph{extremal} POVM cannot be given as a mixture of two different POVMs.
\end{definition}

In the following we list two basic facts on extremal POVMs. 
These are easy consequences of the known characterizations \cite{DaLoPe05},  \cite{Parthasarathy99}, \cite{Pellonpaa11}.
For reader's convenience, we provide direct proofs that do not require any characterization results.

\begin{proposition}\label{prop:basic}
Let $\Ao$ be a POVM consisting of $N$ nonzero effects.
\begin{itemize}
\item[(a)] If $\Ao$ is extremal, then the effects $\Ao(1),\ldots,\Ao(N)$ are linearly independent and $N\leq d^2$.
\item[(b)] If $\Ao$ is rank-1 and the effects $\Ao(1),\ldots,\Ao(N)$ are linearly independent, then $\Ao$ is extremal.
\end{itemize}
\end{proposition}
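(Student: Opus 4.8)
The plan is to prove both parts by elementary operator arguments, sidestepping the characterization results \cite{DaLoPe05}, \cite{Parthasarathy99}, \cite{Pellonpaa11}.

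For part (a), I would argue by contraposition. Since the effects are selfadjoint, they lie in the real vector space $\lhs$, whose dimension is $d^2$; so a dependence over $\complex$ gives (taking adjoints and separating real and imaginary parts) a genuine real relation $\sum_{j=1}^N c_j \Ao(j) = \nul$ with coefficients $c_j\in\real$ not all zero. I then set $\Bo(j) = (1+\epsilon c_j)\Ao(j)$ and $\Co(j) = (1-\epsilon c_j)\Ao(j)$ for a small $\epsilon>0$. Normalization of both is immediate, because the perturbation term $\epsilon\sum_j c_j\Ao(j)$ vanishes. The observation that makes the argument clean is that I need not check the upper bound directly: since $\Bo$ is normalized and each $\Bo(j)\geq\nul$, the inequality $\Bo(j)\leq\sum_k\Bo(k)=\id$ holds automatically. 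Hence it suffices that $1+\epsilon c_j\geq 0$ for every $j$, which holds once $\epsilon$ is small enough, and symmetrically for $\Co$. Then $\Ao=\tfrac12\Bo+\tfrac12\Co$ is a mixture, and since $\Bo(j)-\Ao(j)=\epsilon c_j\Ao(j)\neq\nul$ for an index $j$ with $c_j\neq 0$ (recall $\Ao(j)\neq\nul$), the two POVMs genuinely differ, contradicting extremality. Linear independence of the $\Ao(j)$ in $\lhs$ then forces $N\leq d^2$.

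For part (b), suppose $\Ao=t\Bo+(1-t)\Co$ with $0<t<1$. From $(1-t)\Co(j)\geq\nul$ I get $t\Bo(j)\leq\Ao(j)$, hence $\nul\leq\Bo(j)\leq t^{-1}\Ao(j)$. Writing the rank-1 effect as $\Ao(j)=a_j\kb{\fii_j}{\fii_j}$ with $a_j>0$ and a unit vector $\fii_j$, the main step is the elementary fact that any $X$ with $\nul\leq X\leq\mu\kb{\fii_j}{\fii_j}$ is a nonnegative multiple of $\kb{\fii_j}{\fii_j}$: for $\psi\perp\fii_j$ the bound gives $\ip{\psi}{X\psi}=0$, so $X\psi=\nul$ by positivity, whence $\ran X\subseteq\span\{\fii_j\}$. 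Applying this to $X=\Bo(j)$ yields $\Bo(j)=\lambda_j\Ao(j)$ for some $\lambda_j\geq 0$. Summing and using $\sum_j\Bo(j)=\id=\sum_j\Ao(j)$ gives $\sum_j(\lambda_j-1)\Ao(j)=\nul$, and linear independence forces $\lambda_j=1$ for all $j$, so $\Bo=\Ao$ and likewise $\Co=\Ao$; thus $\Ao$ is extremal.

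I expect the two delicate points to be exactly the ones highlighted. In (a), the effect inequalities must be handled with care, and the key trick is extracting the upper bound from normalization rather than from $\Ao(j)\leq\id$ directly (the latter can fail after rescaling if $\Ao(j)$ has eigenvalue $1$). In (b), the positivity lemma that pins $\Bo(j)$ to the one-dimensional range of $\Ao(j)$ is precisely where the rank-1 hypothesis enters; everything else is routine bookkeeping.
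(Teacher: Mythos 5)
Your proof is correct, and it is close to the paper's in spirit, with one genuine divergence in part (a). For part (b) you reproduce the paper's argument, with the bonus that you actually prove the pinning step (that $\nul\leq X\leq\mu\kb{\fii}{\fii}$ forces $X$ to be a nonnegative multiple of $\kb{\fii}{\fii}$, via $\ip{\psi}{X\psi}=0$ for $\psi\perp\fii$ and positivity), which the paper asserts without proof in the sentence ``As each $\Ao(j)$ is a rank-1 operator, it follows that\dots''. For part (a) the underlying idea is the same --- use a real dependence $\sum_j c_j\Ao(j)=\nul$ to rescale the effects into two distinct POVMs averaging back to $\Ao$ --- but the decompositions differ: you take the symmetric perturbation $(1\pm\epsilon c_j)\Ao(j)$ with $\epsilon>0$ small enough that $1\pm\epsilon c_j\geq 0$, whereas the paper needs no small parameter and instead scales by $1-\lambda_j/\lambda_1$ and $1-\lambda_j/\lambda_N$, where $\lambda_1$ is the largest and $\lambda_N$ the smallest coefficient, so that each of the two POVMs $\Ao'$, $\Ao''$ has one effect annihilated. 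That asymmetric choice is not merely stylistic: producing POVMs with \emph{strictly fewer} nonzero effects is exactly what the paper reuses in the proof of Theorem \ref{th:every}, where iterating the decomposition must terminate because the number of nonzero effects drops by one at each step; your $\epsilon$-perturbation proves Proposition \ref{prop:basic}(a) equally well but would not directly yield that termination argument. Your two extra refinements --- extracting a real dependence from a complex one by taking adjoints, and obtaining $\Bo(j)\leq\id$ from positivity plus normalization $\Bo(j)\leq\sum_k\Bo(k)=\id$ rather than checking the effect inequality directly --- are both sound and fill in points the paper passes over silently (it works with $\lambda_j\in\R$ from the start and never discusses the upper bound, though the same normalization argument is implicitly needed there too, since the paper's scaling factors $1-\lambda_j/\lambda_1$ can exceed $1$).
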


\begin{proof}
\begin{itemize}
\item[(a)] Suppose that the effects $\Ao(1),\ldots,\Ao(N)$ are linearly dependent. 
We will show that $\Ao$ is not extremal.
The linear dependence means that
\begin{equation}
\lambda_1 \Ao(1) + \cdots + \lambda_N \Ao(N) = 0
\end{equation}
for some coefficients $\lambda_j \in \R$ which are not all zeroes. 
Notice that since $\Ao(j)$ are positive operators, there has to be both positive and negative $\lambda_j$'s.
We can assume that $\lambda_1$ is the greatest (hence positive) coefficient and $\lambda_N$ is the smallest (hence negative) coefficient.
We define two POVMs $\Ao'$ and $\Ao''$ by
\begin{equation}\label{eq:mix1}
\left\{
\begin{array}{ll}
\Ao'(1) &= \nul  \\ 
\Ao'(j) &= (1-\lambda_j/\lambda_1) \Ao(j) \quad j=2,\ldots,N
\end{array} \right.
\end{equation}
and
\begin{equation}\label{eq:mix2}
\left\{
\begin{array}{ll}
\Ao''(j) &= (1-\lambda_j/\lambda_N) \Ao(j) \quad j=1,\ldots,N-1\\
\Ao''(N) &= \nul \, .
\end{array} \right.
\end{equation}

Setting $t=\lambda_1/(\lambda_1 - \lambda_N)$ we obtain
\begin{equation}\label{eq:mix3}
\Ao(j) = t \Ao'(j) + (1-t) \Ao''(j)
\end{equation}
for every $j=1,\ldots,N$.
Thus, $\Ao$ is a mixture of $\Ao'$ and $\Ao''$.
Since $\Ao'(N)\neq\nul=\Ao''(N)$, these are two different POVMs and $\Ao$ is not extremal.

The dimension of the real vector space of all selfadjoint operators is $d^2$, hence $N\leq d^2$.

\item[(b)] Suppose that $\Ao=\lambda \Bo + (1-\lambda) \Co$ for some number $0<\lambda<1$ and some observables $\Bo$ and $\Co$.
This implies that $\lambda\Bo(j)\leq\Ao(j)$ and $(1-\lambda)\Co(j)\leq\Ao(j)$ for every $j=1,\ldots,N$.
As each $\Ao(j)$ is a rank-1 operator, it follows that $\Bo(j)=b_j\Ao(j)$ and $\Co(j)=c_j \Ao(j)$ for some non-negative numbers $b_j,c_j\in\R$. 
Since $\Bo$ and $\Co$ satisfy the normalization $\sum_j \Bo(j)=\sum_j \Co(j)=\id$, we obtain 
\begin{equation*}
\sum_{j=1}^N b_j \Ao(j)=\sum_{j=1}^N c_j\Ao(j)=\id \, .
\end{equation*}
Therefore
\begin{equation*}
\qquad \nul = \id-\id = \sum_{j=1}^N (1-b_j) \Ao(j) = \sum_{j=1}^N (1-c_j) \Ao(j)\, . 
\end{equation*}
The operators $\Ao(1),\ldots,\Ao(N)$ are linearly independent, thus $b_j=c_j=1$ for every $j$.
But this means that $\Ao=\Bo=\Co$, hence $\Ao$ is not a mixture of two different POVMs and it is thus extremal.
\end{itemize}
\end{proof}

If $\dim\hi=2$, then it can be shown that every extremal POVM is rank-1 \cite{DaLoPe05}.
In all higher dimensional Hilbert spaces there are PVMs that contain projections with rank greater than 1. 
The projections are the extremal effects \cite{QTOS76}, and it thus follows that all PVMs are extremal \cite{PSAQT82}. 
At this point we are interested on extremal rank-1 POVMs, but we will comment more on different types of extremal POVMs in Sec. \ref{sec:example}.

Let us first make some observations on extremal rank-1 POVMs.
Combining Prop. \ref{prop:Nd} and Prop. \ref{prop:basic} we conclude that {\it the number of nonzero outcomes $N$ of an extremal rank-1 POVM is between $d$ and $d^2$.}
It is easy to give examples of extremal rank-1 POVMs having the minimal number of outcomes $N=d$.

\begin{example}\label{ex:pvm}(\emph{Extremal rank-1 POVM with $d$ outcomes})
Fix an orthonormal basis $\{\varphi_j\}_{j=1}^d$ for $\hi$.
We define a $d$-outcome POVM $\Ao$ by $\Ao(j)=\kb{\varphi_j}{\varphi_j}$, and
each effect $\Ao(j)$ is hence a one-dimensional projection.
Since $\Ao$ is a PVM, it is extremal. 
Alternatively, the extremality of $\Ao$ can be concluded using Prop. \ref{prop:basic};
if $\sum_{j=1}^d c_j \kb{\varphi_j}{\varphi_j} = 0$, then $0=\sum_{j=1}^d c_j \kb{\varphi_j}{\varphi_j}\varphi_k=c_k\varphi_k$ for every $k$. 
Hence $c_1=\cdots=c_d=0$. 
\end{example}

Using the previous example as an auxiliary result, we can prove the following.

\begin{proposition}\label{prop:exist}
For every $N$ satisfying $d\leq N \leq d^2$, there exists an extremal rank-1 POVM $\Ao$ with $N$ outcomes. 
\end{proposition}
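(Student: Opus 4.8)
The plan is to reduce the statement to producing, for each $N$ with $d\le N\le d^2$, a family of $N$ rank-1 effects that are linearly independent and sum to $\id$. By Proposition \ref{prop:basic}(b) any such family is automatically an extremal rank-1 POVM, while Proposition \ref{prop:basic}(a) confirms that $N\le d^2$ is the largest count one could hope for. I would then argue by induction on $N$, the base case $N=d$ being exactly the PVM of Example \ref{ex:pvm}.

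For the inductive step I would pass from an extremal rank-1 POVM $\Ao$ with $N<d^2$ outcomes to one with $N+1$ outcomes by splitting a single rank-$2$ ``chunk''. Write the nonzero effects as $\Ao(j)=\lambda_j\kb{\phi_j}{\phi_j}$. Since the effects sum to $\id$ and $d\ge2$, not all vectors $\phi_j$ can be parallel, so there is a pair, say $\Ao(1),\Ao(2)$, with $\phi_1\nparallel\phi_2$; then $G:=\Ao(1)+\Ao(2)$ is positive of rank exactly $2$, supported on $W:=\span\{\phi_1,\phi_2\}$. Any decomposition $G=F_1+F_2+F_3$ into three positive rank-1 operators (each necessarily supported on $W$) yields a new collection $F_1,F_2,F_3,\Ao(3),\dots,\Ao(N)$ that again sums to $\id$, is again rank-1, and has $N+1$ outcomes. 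Such three-term decompositions of a rank-$2$ positive operator not only exist but form a positive-dimensional family, and this freedom is what I would exploit.

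The main obstacle is to keep the enlarged family linearly independent, so that Proposition \ref{prop:basic}(b) applies again. The three new operators lie in the four-real-dimensional space $\mathrm{Herm}(W)$ and span a three-dimensional subspace that necessarily contains $G$; I would need to choose the decomposition so that this subspace meets $\span\{\Ao(3),\dots,\Ao(N)\}$ only in $0$. Here two facts help: since the untouched effects are linearly independent, $G$ already lies outside their span, and since $N+1\le d^2$ there is room inside $\mathrm{Herm}(W)$, so that by choosing the pair appropriately the intersection of $\span\{\Ao(3),\dots,\Ao(N)\}$ with $\mathrm{Herm}(W)$ can be kept at most one-dimensional, leaving an open set of admissible three-dimensional subspaces. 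The crux of a rigorous write-up is therefore a general-position argument showing that a \emph{generic} three-term decomposition of $G$ realizes one of these good subspaces; the positive dimension of the decomposition family makes this plausible, but converting ``generic'' into an explicit choice, or into a clean dimension count showing that the bad decompositions are negligible, is where the real work lies. The induction then terminates precisely at $N=d^2$, in agreement with Proposition \ref{prop:basic}(a).
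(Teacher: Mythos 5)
Your reduction to Prop.~\ref{prop:basic}(b), the induction on $N$, and the base case $N=d$ via Example~\ref{ex:pvm} all match the paper, but your inductive step is different from the paper's and, as you yourself flag, it has a genuine gap at its crux: nothing in the proposal establishes that some three-term positive rank-1 decomposition of $G=\Ao(1)+\Ao(2)$ keeps the enlarged family linearly independent. In particular, the claim that ``by choosing the pair appropriately the intersection of $\span\{\Ao(3),\dots,\Ao(N)\}$ with $\mathrm{Herm}(W)$ can be kept at most one-dimensional'' is unsupported, and it is exactly where trouble arises: several of the untouched effects may themselves be supported in the same two-dimensional subspace $W$. For instance, with $d=3$ take four linearly independent rank-1 effects summing to the projection onto $W=\span\{e_1,e_2\}$ together with $\kb{e_3}{e_3}$; if the split pair is chosen inside $W$, then $\span\{\Ao(3),\dots,\Ao(N)\}\cap\mathrm{Herm}(W)$ is at least two-dimensional, and since $\span\{F_1,F_2,F_3\}$ is a three-dimensional subspace of the four-dimensional $\mathrm{Herm}(W)$, the count $3+2-4\geq 1$ forces a linear dependence for \emph{every} decomposition of that $G$. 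So you must prove both that a ``good pair'' always exists and, even then, that the three-dimensional spans realizable by \emph{positive} three-term decompositions of $G$ (a constrained subfamily of all 3-planes through $G$ in $\mathrm{Herm}(W)$) actually avoid the bad directions. You explicitly defer this general-position step, so as written the proposal is a plausible program rather than a proof.

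The paper's inductive step sidesteps all of this with a conjugation trick in which linear independence is automatic. Since $N<d^2$, the span $\Aspan$ of the effects is proper, and some one-dimensional projection $P$ lies outside $\Aspan$ (a spectral projection of any selfadjoint operator outside $\Aspan$ will do). Setting $T=\id+P$, one defines $\Ao'(j)=T^{-\half}\Ao(j)T^{-\half}$ for $j\leq N$ and $\Ao'(N+1)=T^{-\half}PT^{-\half}$; normalization holds because $T^{-\half}(\id+P)T^{-\half}=\id$, and conjugation by the invertible $T^{-\half}$ preserves the linear independence of $\Ao(1),\dots,\Ao(N),P$, so Prop.~\ref{prop:basic}(b) applies immediately. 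If you want to rescue your splitting approach, the cleanest fix is to replace the genericity argument by an idea of this kind, where independence is built in rather than chased after the fact.
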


\begin{proof}
Suppose the claim is true for a number $N$ satisfying $d\leq N < d^2$.
We show that the claim is then true also for $N+1$.

Let  $\Ao$ be an extremal rank-1 POVM $\Ao$ with $N<d^2$ outcomes. 
Since $N<d^2$, the real linear span $\Aspan$ of the effects $\Ao(1),\ldots,\Ao(N)$ is a proper subset of the real linear space of selfadjoint operators.
Therefore, there exists a nonzero selfadjoint operator $S$ which is not in $\Aspan$.
The operator $S$ has a spectral decomposition $S=\sum_j \alpha_j P_j$, and at least one of the spectral projections $P_j$ is not in $\Aspan$ since otherwise $S$ would be in $\Aspan$.
We can thus choose a one-dimensional projection $P$ which is not in $\Aspan$ and we denote $T=\id + P$.
Then $T$ is positive and invertible, hence it has invertible square root $T^{\half}$ .
We define
\begin{eqnarray*}
\Ao'(j) &=& T^{-\half}\Ao(j)T^{-\half} \qquad j=1,\ldots,N\\
\Ao'(N+1)&=& T^{-\half}PT^{-\half}
\end{eqnarray*}
The resulting POVM $\Ao'$ is rank-1 and the effects $\Ao'(1),\ldots,\Ao'(N+1)$ are linearly independent.

The claim now follows from this consideration combined with Example \ref{ex:pvm}.
\end{proof}

We have seen that by starting from a $d$-outcome extremal rank-1 POVM, we can construct extremal rank-1 POVMs with any allowed number of outcomes.
In the following example we demonstrate the method used in the proof of Prop. \ref{prop:exist}.

\begin{example}\label{ex:qubit-1}
We take $\hi=\C^2$ and construct a three-outcome extremal rank-1 POVM from a two-outcome extremal rank-1 POVM.
By Prop. \ref{prop:Nd} a rank-1 POVM $\Ao$ with two outcomes is necessarily a PVM. 
Let us consider $\Ao$ with $\Ao(1)=\half(\id+\sigma_x)$, $\Ao(2)=\half(\id-\sigma_x)$.
Using the same notations as  in the proof of Prop. \ref{prop:exist}, we can choose $P=\half(\id+\sigma_z)$ and this gives $T^{-\half}=c_+ \id + c_-\sigma_z$ with $c_\pm=(1\pm\sqrt{2})/2\sqrt{2}$.
The resulting extremal POVM $\Ao'$ is therefore
\begin{eqnarray*}
\Ao'(1) &=& T^{-\half}\Ao(1)T^{-\half} = \frac{3}{8} \bigl( \id + \frac{4}{3\sqrt{2}}\sigma_x - \frac{1}{3} \sigma_z \bigr) \\
\Ao'(2) &=& T^{-\half}\Ao(2)T^{-\half} = \frac{3}{8} \bigl( \id - \frac{4}{3\sqrt{2}}\sigma_x - \frac{1}{3} \sigma_z \bigr)\\
\Ao'(3)&=& T^{-\half}PT^{-\half} = \frac{1}{4} \bigl( \id+\sigma_z \bigr) \, .
\end{eqnarray*}
This is extremal by construction, and one can also directly verify that the effects are rank-1 and linearly independent.
\end{example}

\subsection{Implementation of every POVM}

As we have recalled in Subsec. \ref{sec:mix}, the number of outcomes of extremal rank-1 POVMs ranges from $d$ to $d^2$ and the extremality criterion is simply the linear independence of its effects. 

In general, the extremality criterion is the following \cite{DaLoPe05}, \cite{Parthasarathy99}, \cite{Pellonpaa11}:
Let $\Ao$ be a POVM with nonzero effects $\Ao(j)$, $j=1,\ldots,N$. 
We can write 
\begin{equation}\label{eq:spectral}
\Ao(j)=\sum_{k=1}^{n(j)}\kb{\psi_k(j)}{\psi_k(j)} \, ,
\end{equation}
where the vectors $\psi_k(j)$, $k=1,\ldots,n(j)$, are nonzero and orthogonal.
Then $\Ao$ is extremal if and only if the operators $\kb{\psi_k(j)}{\psi_\ell(j)}$  for  $j=1,\ldots,N$ and $k,\,\ell=1,\ldots,n(j)$ are linearly independent.

We conclude that the set of all extremal rank-1 POVMs is much more tractable and easier to handle than the set of all extremal POVMs. 
It would be therefore desirable to concentrate only on extremal rank-1 POVMs, but then one has to justify this limitation in some way.
The following two results show that we can restrict to extremal rank-1 POVMs if we are allowed to perform their relabeling and mixing.

\begin{theorem}\label{th:every}
Every POVM can be obtained from extremal rank-1 observables by mixing and relabeling. 
\end{theorem}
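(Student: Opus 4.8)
The plan is to combine the relabeling reduction of Proposition \ref{prop:relabel} with an inductive mixing argument built on the decomposition appearing in the proof of Proposition \ref{prop:basic}(a). First I would invoke Proposition \ref{prop:relabel}: any POVM $\Ao$ is a relabeling of some rank-1 POVM $\Bo$. Hence it suffices to prove the statement for rank-1 POVMs, since once $\Bo$ is produced from extremal rank-1 POVMs by mixing and relabeling, one additional relabeling yields $\Ao$.

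The core observation is that the mixing decomposition used in Proposition \ref{prop:basic}(a) preserves the rank-1 property while strictly lowering the number of nonzero effects. Concretely, let $\Bo$ be a rank-1 POVM with $N$ nonzero effects. If these effects are linearly independent, then $\Bo$ is already an extremal rank-1 POVM by Proposition \ref{prop:basic}(b), and there is nothing to do. Otherwise they are linearly dependent, and the construction \eqref{eq:mix1}--\eqref{eq:mix3} writes $\Bo = t\Bo' + (1-t)\Bo''$ with $0<t<1$. Each effect of $\Bo'$ and $\Bo''$ is a non-negative scalar multiple of a corresponding effect of $\Bo$, hence is either $\nul$ or rank-1, so $\Bo'$ and $\Bo''$ are again rank-1 POVMs. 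Moreover, the choice of $\lambda_1$ as the largest and $\lambda_N$ as the smallest coefficient forces the multipliers $1-\lambda_j/\lambda_1$ and $1-\lambda_j/\lambda_N$ to be non-negative and to vanish for at least one index each, so both $\Bo'$ and $\Bo''$ have at most $N-1$ nonzero effects.

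With this I would run an induction on the number $N$ of nonzero effects of a rank-1 POVM. By Proposition \ref{prop:Nd} one always has $N\geq d$, and the base case $N=d$ is immediate: there Proposition \ref{prop:Nd} forces $\Bo$ to be a PVM consisting of one-dimensional projections, which is an extremal rank-1 POVM as in Example \ref{ex:pvm}. For the inductive step with $N>d$, if $\Bo$ is extremal we are done; otherwise the decomposition above expresses $\Bo$ as a mixture of two rank-1 POVMs $\Bo'$ and $\Bo''$, each with at most $N-1$ nonzero effects, and each still having at least $d$ of them by Proposition \ref{prop:Nd}. By the induction hypothesis each of $\Bo'$ and $\Bo''$ is obtainable from extremal rank-1 POVMs by mixing and relabeling, and therefore so is their mixture $\Bo$.

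I do not expect a serious obstacle here; the argument is essentially bookkeeping once the key observation is in place. The point that needs care is verifying that the decomposition of Proposition \ref{prop:basic}(a) truly keeps us inside the class of rank-1 POVMs and strictly decreases the outcome count, which rests on the positivity of the multipliers and hence on $\lambda_1$ and $\lambda_N$ being the extreme coefficients. One should also note that padding a POVM with zero effects, needed to form a mixture of two POVMs with different outcome counts, is permitted, since POVMs differing only in their number of zero effects are identified and such padding is itself a relabeling.
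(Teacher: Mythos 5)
Your proposal is correct and takes essentially the same route as the paper's own proof: reduce to a rank-1 POVM via Proposition \ref{prop:relabel}, then repeatedly apply the mixing decomposition \eqref{eq:mix1}--\eqref{eq:mix3} from Proposition \ref{prop:basic}(a), which preserves the rank-1 property and strictly decreases the number of nonzero effects, until Proposition \ref{prop:basic}(b) certifies extremality. Your explicit induction on $N$ with base case $N=d$ is merely a cosmetic repackaging of the paper's termination argument (and the base case is in fact subsumed, since orthogonal one-dimensional projections are already linearly independent).
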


\begin{proof}
Let $\Ao$ be a $N$-outcome POVM.
By Prop. \ref{prop:relabel} there is a rank-1 POVM $\Bo$ such that $\Ao$ is a relabeling of $\Bo$.
If the effects of $\Bo$ are linearly independent, then $\Bo$ is extremal and the claim is true.
Hence, we concentrate on the case that the effects of $\Bo$ are linearly dependent.

If a rank-1 observable $\Bo$ has $M$ nonzero outcomes and the corresponding effects are linearly dependent, then we can write $\Bo$ as a mixture of two observables $\Bo'$ and $\Bo''$, both consisting of (at most) $M-1$ nonzero rank-1 effects.
The proof is similar as in Prop. \ref{prop:basic}; the linear dependence means that
\begin{equation}
\lambda_1 \Bo(1) + \cdots + \lambda_M \Bo(M) = 0
\end{equation}
for some coefficients $\lambda_j \in \R$ which are not all zeroes. 
Choosing similarly as in formulas \eqref{eq:mix1}-\eqref{eq:mix3}, we can write $\Bo$ as a mixture of two different POVMs $\Bo'$ and $\Bo''$, both having at most $M-1$ nonzero effects.
All their nonzero effects are rank-1 since all nonzero effects of $\Bo$ are rank-1.

If $\Bo'$ or $\Bo''$ consist of linearly dependent effects, we continue this procedure. 
Since the number of nonzero effects is always decreased by one, the process has to terminate. 
We end up with extremal rank-1 POVMs.  
\end{proof}

Let us remark that the proof of Theorem \ref{th:every} shows that to implement a given POVM $\Ao$ which is not extremal rank-1, we first mix and then relabel.
Namely, in the first step we mix two or more extremal rank-1 POVMs to obtain a POVM with more outcomes than $\Ao$. 
In the second step we relabel the measurement outcomes in a suitable way and in this way obtain $\Ao$.

As one would expect, the first step of mixing is not needed if the desired POVM $\Ao$ is extremal.
(This is perhaps not transparent from Theorem \ref{th:every} since mixing is used only as an intermediate step.)
We have the following result.

\begin{theorem}\label{th:every-ext}
Every extremal POVM is either rank-1 or a relabeling of an extremal rank-1 POVM.
\end{theorem}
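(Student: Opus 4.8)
The plan is to produce, for a given extremal POVM $\Ao$, an explicit extremal rank-1 POVM of which $\Ao$ is a relabeling, unless $\Ao$ happens to be rank-1 already. The natural candidate is the spectral refinement $\Bo$ from the proof of Prop.~\ref{prop:relabel}: writing each nonzero effect in the orthogonal rank-1 form $\Ao(j)=\sum_{k=1}^{n(j)}\kb{\psi_k(j)}{\psi_k(j)}$ and setting $\Bo(j,k)=\kb{\psi_k(j)}{\psi_k(j)}$ yields a rank-1 POVM with $\Ao(j)=\sum_k \Bo(j,k)$, so that $\Ao$ is a relabeling of $\Bo$ through the projection map $f(j,k)=j$. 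The entire content of the theorem thus reduces to showing that this particular $\Bo$ is extremal whenever $\Ao$ is.

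First I would dispose of the trivial alternative: if $\Ao$ is rank-1, we are already in the first case of the claimed dichotomy, so I may assume that $\Ao$ is extremal but not rank-1 (so that at least one $n(j)\geq 2$, and the refinement genuinely increases the number of outcomes). Then I would invoke the general extremality criterion recalled just before the theorem: since $\Ao$ is extremal, the operators $\kb{\psi_k(j)}{\psi_\ell(j)}$, ranging over all $j$ and all $k,\ell\in\{1,\ldots,n(j)\}$, are linearly independent.

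The decisive observation is that the effects of $\Bo$ are precisely the \emph{diagonal} members $\kb{\psi_k(j)}{\psi_k(j)}$ of this collection. Since a subset of a linearly independent set is itself linearly independent, the effects $\Bo(j,k)$ are linearly independent. As $\Bo$ is rank-1, Prop.~\ref{prop:basic}(b) then guarantees that $\Bo$ is extremal, and the theorem follows.

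I expect the main subtlety to lie in this last step, namely in deploying the correct notion of linear independence. Prop.~\ref{prop:basic}(a) only tells us that the effects $\Ao(1),\ldots,\Ao(N)$ of the original extremal POVM are linearly independent, which is strictly weaker than what is needed: a priori the rank-1 pieces $\kb{\psi_k(j)}{\psi_k(j)}$ could be dependent even when their sums $\Ao(j)$ are independent. It is the finer criterion, which also controls the off-diagonal operators $\kb{\psi_k(j)}{\psi_\ell(j)}$, that supplies the independence of the diagonal pieces and hence the extremality of $\Bo$. Recognizing that the presence of the off-diagonal terms in the criterion forces independence of the diagonal terms is the heart of the argument, and it is also what makes clear why no mixing step is required here, in contrast to Theorem~\ref{th:every}.
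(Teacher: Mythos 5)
Your proposal is correct and follows essentially the same route as the paper: both define the spectral refinement $\Bo(j,k)=\kb{\psi_k(j)}{\psi_k(j)}$, note that $\Ao$ is its relabeling, and deduce linear independence of the effects $\Bo(j,k)$ from the full extremality criterion (they form a subset of the linearly independent family $\kb{\psi_k(j)}{\psi_\ell(j)}$), so that Prop.~\ref{prop:basic}(b) gives extremality of $\Bo$. Your closing remark correctly identifies the one point the paper leaves implicit, namely that independence of the sums $\Ao(j)$ alone would not suffice and the off-diagonal operators in the criterion are what guarantee independence of the diagonal pieces.
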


\begin{proof}
Assume that $\Ao$ is an extremal POVM with nonzero effects $\Ao(j)$, $j=1,\ldots,N$. 
We write each $\Ao(j)$ in the form \eqref{eq:spectral}.
The extremality of $\Ao$ is equivalent to the fact that the operators $\kb{\psi_k(j)}{\psi_l(j)}$, $j=1,\ldots,N$, $k,\,l=1,\ldots,n(j)$, are linearly independent.
We define a rank-1 POVM $\Bo$ by
$\Bo(j,k)=\kb{\psi_k(j)}{\psi_k(j)}$ where $j=1,\ldots,N$, $k=1,\ldots,n(j)$.
Obviously $\Ao$ is a relabeling of $\Bo$.
Since the effects $\Bo(j,k)$ are linearly independent, $\Bo$ is extremal.
\end{proof}

Theorem \ref{th:every-ext} indicates that we can produce other extremal POVMs from an extremal rank-1 POVM by relabeling its outcomes.
However, we emphasize that not every relabeling leads to an extremal POVM.
For instance, if we start from the three-outcome POVM $\Ao'$ written in Example \ref{ex:qubit-1} and relabel it to obtain a two-outcome POVM, then the resulting POVM is not extremal.
On the other hand, if we start from any PVM, then its arbitrary relabeling is still a PVM and therefore extremal.

\section{Four types of extremal POVMs}\label{sec:example}

The extremality condition, rephrased in \eqref{eq:spectral}, can be used to decide whether a given POVM is extremal or not.
However, it leaves open whether there is a simpler characterization of extremal elements.

First of all, we have two basic types of extremal POVMs:
\begin{itemize}
\item[(a)] a rank-1 POVM with linearly independent elements;
\item[(b)] a PVM (of any rank).
\end{itemize}
A third type is a hybrid of the previous ones:
\begin{itemize}
\item[(c)] a POVM such that each effect $\Ao(j)$ is either a projection or a rank-1 operator, and the the operators $\Ao(1),\ldots,\Ao(N)$ are linearly independent.
\end{itemize}

It is easy to verify that all POVMs of (c)-type are extremal. 
(A slight modification of the proof of Prop. \ref{prop:basic}b works).
Clearly, (a) and (b) are just special instances of (c).
We also recall that for $\dim\hi=2$ every extremal POVM is rank-1 \cite{DaLoPe05}, i.e., in this case there are only (a)-type extremal POVMs.

The obvious question is: are all extremal POVMs of the (c)-type?
In the following we demonstrate that this is not the case.
Hence, there exists
\begin{itemize}
\item[(d)] an extremal POVM which is not (c)-type.
\end{itemize}
Our example uses a Hilbert space $\hi$ with $\dim\hi=4$, but with a small modification one can generate a similar example also in all higher dimensions. 

\begin{example}\label{ex:d}
Denote the third roots of 1 by $\omega_3^j$, $j=1,\,2,\,3$, i.e. $\omega_3=e^{i2\pi/3}$. Consider the case $\hi=\C^4$ and pick an orthonormal basis $\{|1\rangle,\,|2\rangle,\,|3\rangle,\,|4\rangle\}$ for $\C^4$. Define operators $\Ao(j)\in\mathcal L(\C^4)$, $j=1,\,2,\,3$,
\begin{eqnarray*}
\Ao(j)&=&\frac13\Big(\id+\omega_3^j\big(|1\rangle\langle3|+|2\rangle\langle4|\big)+\overline{\omega_3^j}\big(|3\rangle\langle1|+|4\rangle\langle2|\big)\Big)\\
&=&\frac13\big(|\psi_1(j)\rangle\langle\psi_1(j)|+|\psi_2(j)\rangle\langle\psi_2(j)|\big),
\end{eqnarray*}
where $\psi_1(j)=|1\rangle+\overline{\omega_3^j}|3\rangle$ and $\psi_2(j)=|2\rangle+\overline{\omega_3^j}|4\rangle$, $j=1,\,2,\,3$. 
It is straightforward to verify that the assignment $\{1,\,2,\,3\}\ni j\mapsto\Ao(j)$ defines a POVM.

The extremality characterization \eqref{eq:spectral} states that $\Ao$ is extremal iff the operators $E_j^{kl}=|\psi_k(j)\rangle\langle\psi_l(j)|$,
\begin{eqnarray*}
E_j^{11}&=&|1\rangle\langle1|+|3\rangle\langle3|+\omega_3^j|1\rangle\langle3|+\overline{\omega_3^j}|3\rangle\langle1|,\\
E_j^{22}&=&|2\rangle\langle2|+|4\rangle\langle4|+\omega_3^j|2\rangle\langle4|+\overline{\omega_3^j}|4\rangle\langle2|,\\
E_j^{12}&=&|1\rangle\langle2|+|3\rangle\langle4|+\omega_3^j|1\rangle\langle4|+\overline{\omega_3^j}|3\rangle\langle2|,\\
E_j^{21}&=&(E_j^{12})^*,
\end{eqnarray*}
$j=1,\,2,\,3$, are linearly independent. 
This is easily seen to be the case using the fact that the equations 
\begin{equation*}
\sum_{j=1}^3\lambda_j=0 \, , \quad \sum_{j=1}^3\omega_3^j\lambda_j=0 \, , \quad \sum_{j=1}^3\overline{\omega_3^j}\lambda_j=0
\end{equation*}
are satisfied by complex numbers $\lambda_1,\,\lambda_2$ and $\lambda_3$ iff $\lambda_1=\lambda_2=\lambda_3=0$.

It is obvious that the operators $\Ao(j)$ are all of rank 2. 
Since $\Ao(j)^2=\frac23\Ao(j)$, these operators are not projections but scalar multiples of projections.
Hence $\Ao$ is not of (c)-type.
\end{example}

We recall that a POVM $\Ao$ with two outcomes is extremal iff $\Ao$ is a PVM \cite{QTOS76}.
Therefore, an extremal POVM of (d)-type has at least three outcomes, and Example \ref{ex:d} is in this sense minimal.
We leave it as an open question whether (d)-type extremal POVMs exists in dimension three.

\section{POVMs on infinite outcome set}\label{sec:general}

In this section we make some remarks that are related to POVMs on infinite outcome set.
These observations are more technical than the previous results.

Generally, the set of measurement outcomes need not be finite. 
It can be, for instance, the set of all real numbers $\R$.
In this case, one has to specify not only the outcome set $\Omega$ but also the $\sigma$-algebra $\Sigma$ of subsets of $\Omega$.
The pair $(\Omega,\Sigma)$ is called an \emph{outcome space}.
In the general formulation a POVM $\Ao$ is a mapping from a $\sigma$-algebra $\Sigma$ to the set of effects $\eh$, and it is required to satisfy the normalization
$\Ao(\Omega)=\id$ and the $\sigma$-additivity $\Ao(\cup X_j) = \sum_j \Ao(X_j)$ for every sequence of disjoint sets $X_j\in\Sigma$.

All POVMs with a fixed outcome space $(\Omega,\Sigma)$ form a convex set.
A characterization of the extremal elements has been derived in a recent work \cite{Pellonpaa11}.
Some of the features of extremal POVMs with finite outcome set have direct generalizations in this general context.
For instance,
if $\Ao$ is an extremal POVM and $X_1,\ldots,X_N\in\Sigma$ are disjoint sets such that $\Ao(X_j)\neq\nul$, then the effects $\Ao(X_1),\ldots,\Ao(X_N)$ are linearly independent (see p.\ 6 in \cite{Pellonpaa11}).
Note that this is a generalization of the well-known result written in Prop. \ref{prop:basic} and it holds also in the case of an infinite dimensional Hilbert space.
From this fact follows that, if $\Ao$ is extremal then there are at most $d^2$ disjoint sets $X_j$ such that $\Ao(X_j)\ne 0$. Hence, an extremal $\Ao$ is concentrated on the set $\cup_{j=1}^N X_j$, $N\le d^2$, but it does not necessarily follow that $\Ao$ is discrete (a POVM $\Ao$ is said to be discrete if there exists a finite or countably infinite set $X\in\Sigma$ such that $\Ao(X)=\id$).
We demonstrate this in Example  \ref{ex:cofinite} below.

\begin{example}\label{ex:cofinite}
Let $\Omega$ be a uncountable set (for instance the real line $\R$ or the interval $[0,1]$) and $\Sigma$ the cofinite $\sigma$-algebra on $\Omega$.
This means that a set $X\subset\Omega$ belongs to $\Sigma$ if and only if $X$ is either countable or the complement of $X$ is countable.

Let $\hi$ be a finite dimensional Hilbert space.
We fix an element $x\in\Omega$ and a projection $P\in\eh$, $0\neq P \neq \id$.
We then define a POVM $\Ao$ on $\Sigma$ by the conditions $\Ao(\{x\})=P$, $\Ao(\{y\})=0$ whenever $y\neq x$ and $\Ao(\Omega)=\id$.
This implies that $\Ao(\Omega\smallsetminus\{x\})=\id-P$.
Since $\Ao$ is PVM, it is extremal. 
But $\Ao$ is not discrete since $\Ao(X)$ is either $P$ or $0$ for any countable set $X$ (depending on whether $x$ is in $X$ or not).
\end{example}

The cofinite $\sigma$-algebra used in Example \ref{ex:cofinite} is the Borel $\sigma$-algebra of $\Omega$ equipped with the cofinite topology. 
In this topology a set $X\subseteq\Omega$ is open if the complement $\Omega\setminus X$ is finite.
For example, when $\Omega=\R$ the cofinite topology is coarser than the usual topology of $\R$.
The cofinite topology is not Hausdorff if $\Omega$ is infinite, and one can claim that this is therefore not a physically motivated example. 
In any case, it seems to be worthwhile to pinpoint the assumptions that lead to the conclusion that extremal POVMs on a finite dimensional system are discrete.

Suppose that $\Omega$ is a Hausdorff topological space and $\Sigma$ is the Borel  $\sigma$-algebra of $\Omega$.
We can then speak about the support of $\Ao$; a point $x\in\Omega$ belongs to the support of $\Ao$ if $\Ao(U_x)\neq\nul$ for every open set $U_x\subset\Omega$ containing $x$.
Since $\Omega$ is Hausdorff, any finite number of points have disjoint neighborhoods.
Our earlier discussion thus yields a conclusion that \emph{if $\Ao$ is extremal, then its support contains at most $d^2$ points.}

The support of $\Ao$ is a closed set, hence its complement set $V$ is open. 
We want to have $\Ao(V)=0$ in order to conclude that $\Ao$ is discrete.
This conclusion can be obtained by several different additional assumptions, either on $\Omega$ or on $\Ao$. 

For instance, suppose that $\Omega$ is second countable, implying that every open cover of a subset $X\subseteq\Omega$ has a countable subcover \cite{GT75}.
In this case the complement set $V$ is a countable union of sets of measure zero, therefore $\Ao(V)=0$.

Another independent assumption giving the desired conclusion is that $\Ao$ is \emph{inner regular}. 
This means that for every $\psi\in\hi$ and open set $U\subseteq\Omega$, we have
\begin{equation*}
\ip{\psi}{\Ao(U)\psi} = \sup \{ \ip{\psi}{\Ao(K)\psi} : K\subseteq U,\ \textrm{$K$ compact}  \} \, .
\end{equation*}
This, again, leads to $\Ao(V)=0$, and we thus reach the result first proved in \cite{ChDaSc07} (using the assumptions that $\Omega$ is a locally compact Hausdorff space and $\Ao$ is regular); \emph{if $\Ao$ is extremal, then it is discrete}.
More precisely, we can write
\begin{equation*}
\Ao(X)=\sum_{j=1}^N \chi_X(x_j) E_j
\end{equation*}
for some finite number $N\leq d^2$ of elements $x_1,\ldots,x_N\in\Omega$ and effects $E_1,\ldots,E_N$.

\section{Conclusion}\label{sec:conclusion}

We have demonstrated that any finite outcome POVM on a finite dimensional system can be obtained from extremal rank-1 POVMs using mixing and relabeling, while any extremal one can be obtained from extremal rank-1 POVMs using relabeling only.
Therefore, even if relabeling is a very simple procedure, it allows to summarize the structure of POVMs in a convenient and accessible form.

It is clear that the mathematical structure of POVMs on an infinite dimensional system is considerably more complicated. 
Further understanding will be helpful in understanding which POVMs are relevant when searching for an optimal measurement for some task.
This problem will be studied elsewhere.

\section*{Acknowledgements}

The authors wish to thank Kari Ylinen for useful discussions. 
This work has been supported by Academy of Finland and The Emil Aaltonen Foundation.

\end{document}